\newtheorem{theorem}{Theorem}[section]
\newtheorem{prop}[theorem]{Proposition}
\newtheorem{lemma}[theorem]{Lemma}
\newtheorem{corollary}[theorem]{Corollary}
\theoremstyle{definition}
\newtheorem{definition}{Definition}[section]
\let\bb\mathbb
\let\x\times
\let\brhd\blacktriangleright
\let\ox\otimes
\newcommand\utr{u^R_\ox}
\newcommand\utl{u^L_\ox}
\let\lc\ulcorner
\let\rc\urcorner
\make@display@tag\ltx@label{#1}}}}}}
\title{Actegories, Copowers, and Higher-Order Message Passing Semantics}
\author{Robin Cockett
\institute{University of Calgary\\Calgary, Canada}
\email{robin@ucalgary.ca}
\and
Melika Norouzbeygi
\institute{University of Calgary\\Calgary, Canada}
\email{melika.norouzbeygi@ucalgary.ca}
}
\begin{document}

\maketitle

\begin{abstract}
In this paper we prove that giving a right actegory with hom-objects is equivalent to giving a right-enriched category with copowers. While this result is known in the closed symmetric setting, our contribution extends the equivalence to non-closed and non-symmetric setting. This generalization is motivated by the semantics of higher-order message passing in the \textbf{Categorical Message Passing Language (CaMPL)}, a concurrent language whose semantics is given by a linear actegory. A desirable feature for this language is the support of higher-order processes: processes that are passed as first class citizens between processes. While this ability is already present in any closed linear type systems -- such as {\bf CaMPL}'s -- to support arbitrary recursive process definitions requires the ability to reuse passed processes. Concurrent resources in {\bf CaMPL}, however, cannot be duplicated, thus, passing processes as linear closures does not provide the required flexibility. This means processes must be passed as sequential data and the concurrent side must be {\em enriched} in the sequential side, motivating the technical result of this paper.
\end{abstract}

\section{Introduction}
A right $\bb{A}$-actegory $\bb{X}$ consists of a monoidal category $\bb{A}$ acting on a category $\bb{X}$ via an action functor $\lhd: \bb{X} \times \bb{A} \to \bb{X}$ in a manner compatible with the monoidal structure of $\bb{A}$. If this action functor admits a right adjoint, written as $\forall X\in \bb{X}, X \lhd - \dashv \mathrm{Hom}(X, -): \bb{A} \to \bb{X}$, then $\bb{X}$ is called a right $\bb{A}$-actegory with \emph{hom-objects}.

A category $\bb{X}$ is \emph{right enriched} in the monoidal category $\bb{A}$ if, for every pair of objects $X, Y \in \bb{X}$, there exists a hom-object $\mathrm{Hom}(X, Y) \in \bb{A}$ along with a composition morphism $m: \mathrm{Hom}(X, Y) \otimes \mathrm{Hom}(Y, Z) \to \mathrm{Hom}(X, Z)$ and a unit morphism $\mathrm{id}: I \to \mathrm{Hom}(X, X)$ satisfying the associativity and unitality conditions described in Definition~\ref{enriched-cat-def} below.

A right $\bb{A}$-enriched category $\bb{X}$ is said to be \emph{copowered} if, for each $A \in \bb{A}$ and $X \in \bb{X}$, there exists a morphism $\eta: A \to \mathrm{Hom}(X, X \lhd A)$ in $\bb{A}$, called the \emph{unit of the copower}, such that for every morphism $f: B \to \mathrm{Hom}(X \lhd A, Y)$, there exists a bijective correspondence~\cite{lucyshynv}:
\[
\infer=[]{A \otimes B \xrightarrow[]{(\eta \otimes f) \, m} \mathrm{Hom}(X, Y)}{B \xrightarrow[]{f} \mathrm{Hom}(X \lhd A, Y)}
\]
where $m$ denotes composition in $\bb{A}$. In the case that $\bb{A}$ is closed, this is equivalent to the existence of an $\bb{A}$-natural isomorphism $\gamma: \mathrm{Hom}(X \lhd A, Y) \cong A \multimap \mathrm{Hom}(X, Y)$~\cite{aNoteActionsOfAMonoidalCategory}.

This paper proves that giving a right $\bb{A}$-actegory with hom-objects is equivalent to giving a right $\bb{A}$-enriched category with copowers. While this result was known in the closed symmetric setting~\cite{aNoteActionsOfAMonoidalCategory} (and more generally in the bicategorical setting \cite{GORDON1997167}), our contribution generalizes it to the non-closed and non-symmetric setting. This extension can be applied to model the semantics of message passing in languages like \textbf{CaMPL}, where being closed is not assumed. Furthermore, we prove that a category is both powered and copowered if and only if the copower functor admits a left parameterized left adjoint, denoted $- \lhd A \dashv A \brhd -$.  Having such a parameterized adjoints is a fundamental feature of the semantics of the message passing \cite{logicOfMessagePassing}.

The motivation for this work arises from the design of the {\bf Ca}tegorical {\bf M}essage {\bf P}assing {\bf L}anguage ({\bf CaMPL}), a concurrent programming language with categorical semantics given by a linear actegory. The sequential side of {\bf CaMPL} is functional in style, while the concurrent side supports message passing between processes along typed channels or protocols. {\bf CaMPL} is based on the work of Cockett and Pastro \cite{logicOfMessagePassing}, which provides both a formal proof theory of message passing and its categorical semantics. It was subsequently implemented by Prashant Kumar \cite{ImplementationOfMPL} and Jared Pon \cite{jaredpon} at the University of Calgary.

The semantics of {\bf CaMPL} lies in a linear actegory \cite{actegoriesAmthmatician}, which is a linearly distributive category \cite{COCKETT1997133} with a distributive monoidal category acting on it in two ways $\lhd: \bb{X}\x\bb{A}\to\bb{X}$ and $\brhd: \bb{A}^{op}\x\bb{X}\to\bb{X}$, such that $\lhd$ is the parameterized left adjoint of $\brhd$  (i.e., for all $A\in\bb{A}$, $-\rhd A \dashv A\brhd-$).\footnote{In \cite{logicOfMessagePassing} the actions $\lhd$ and $\brhd$ are written using $\circ$ and $\bullet$ with $A\circ X := X\lhd A$ and $A\bullet X := A\brhd X$.} These two functors allow the sending and receiving of sequential messages (objects of $\bb{A}$) along concurrent channels (objects of $\bb{X}$). The parameterized adjunction arises from the processes of transmitting or receiving a message, and the fact that this can be done equivalently on input or output polarity channels.

Passing concurrent processes to other processes is already possible in the concurrent side of {\bf CaMPL} as it is a $*$-autonomous category and $*$-autonomous categories are closed:
\[ \infer[]{\Gamma\vdash A^\perp\oplus B := A \multimap B}{\Gamma, A\vdash B} \]
This allows a process $A \multimap B$ to be passed to another process through an input polarity channel of type $A^\perp \oplus B$ and ``applied'' to an input channel $A$ to yield an output channel $B$:
\begin{lstlisting}
    proc apply :: | A, Neg(A) (+) B => B = 
        | a, neg_a_b => b -> 
            fork neg_a_b as
                    neg_a -> neg_a |=| neg(a)
                    b' -> b |=| b'     
\end{lstlisting}
\noindent Here the channel \texttt{neg\_a\_b} of type \texttt{Neg(A)(+)B} (i.e. $A^{\perp}\oplus B$) is forked into two independent processes which, using \texttt{|=|}, respectively identify channels \texttt{neg\_a} with the negation of \texttt{a} and \texttt{b'} with \texttt{b}.

However, concurrent resources cannot be duplicated, thus passing a process on a concurrent channel will not support arbitrary recursive process definitions which may require duplicating the code.  For example, suppose we want to define a process that has input (polarity) channels of type $A$ and $A^\perp \oplus A$ (representing the input process of type $A \multimap A$), and an output (polarity) channel of type $A$  along with a (sequential) counter $n: Int$ whose aim is to apply the input process, $n$ times to the input channel. This suggests the following {\bf CaMPL} code for defining this process:
\begin{lstlisting}
    proc q :: Int | A, Neg(A) (+) A => A =
        n | x, p => y -> case n of
                0 -> x |=| y                      -- base case 
                _ -> plug                         -- recursive case
                        apply(x, p => z)          -- a process that applies p
                        q( n - 1 | z, p => y)     -- recurse
\end{lstlisting}
However, this code snippet is not a valid {\bf CaMPL} program, since the channel $p$ is used twice, violating the linearity constraint on concurrent resources. 

The solution to the problem is to store the concurrent process as sequential data which can then be duplicated and used multiple times. To achieve this, we have introduced two new language constructs:
\begin{itemize}
    \item {\tt store}: Converts a concurrent process into sequential data.
    \item {\tt use}: Allows the calling of a stored process.
\end{itemize}
We define a new sequential data type ${\sf Store}(\Phi|\Gamma\Vdash\Delta)$ to represent stored processes. Using these new constructs, the previous program can be rewritten (correctly) as follows:
\begin{lstlisting}
    proc q :: Int, Store( | A => A) | A => A =
        n, p | x => y -> case n of
                0 -> x |=| y                      -- base case 
                _ -> plug                         -- recursive case
                        use(p)( | x => z)         -- using the stored process
                        q( n - 1, p | z => y)     -- recurse
\end{lstlisting}

\noindent The categorical machinery which enables the storing of concurrent processes as sequential data is given by an enrichment of the concurrent world in the sequential world.  This in turn is abstractly given by the equivalence between an actegory with hom-objects and an enriched category with copower, as is proved in this paper.
\\\\
\noindent{\bf Acknowledgment:} The authors would like to thank Rory Lucyshyn-Wright for pointing out that our main technical result does not require a closed setting.
\section{On Actegories, Enrichment, and Copowers}
%\subsection{Actegories}
% \begin{definition}
%     Let \(\bb{X}, \bb{Y}, \bb{Z}\) be categories. An \textbf{\(\bb{X}\)-parameterized adjunction}\cite{mac1998categories} between \(\bb{Y}\) and \(\bb{Z}\) consists of a pair of functors $F: \bb{Z} \times \bb{X} \to \bb{Y}$ and $G : \bb{X}^{\mathrm{op}} \times \bb{Y} \to \bb{Z}$ such that for every object \(X \in \bb{X}\), the functor
% $
% F_X := F(-, X) : \bb{Z}\to\bb{Y}
% $
% has a right adjoint
% $
% G_X := G(X, -) : \bb{Y}\to\bb{Z},
% $
% and the adjunction bijection
% $$
% \infer=[]{
% Z\to G(X, Y)
% }{F(Z, X)\to Y}
% $$
% \noindent is natural in \(Y \in \bb{Y}\), and \(Z \in \bb{Z}\).
% \end{definition}

\begin{definition}\label{actegory-def}
A \textbf{right $\bb{A}$-actegory} consists of a monoidal category $\bb{A}$, a category $\bb{X}$, an action functor $\lhd : \bb{X}\x\bb{A}\to\bb{X}$ and two natural isomorphisms $a_\lhd: X \lhd (A \ox B) \to (X \lhd A) \lhd B$ and $u_\lhd: X\lhd I \to X$ satisfying the following coherences \cite{actegoriesAmthmatician}:
\begin{center}
\footnotesize
    \begin{tikzcd}
        X\lhd (A \ox I) \arrow[r, "a_\lhd"] \arrow[d, " 1 \lhd u^R_\ox"']
        & (X \lhd A) \lhd I \arrow[ld, "u_\lhd"] \\
        X \lhd A 
    \end{tikzcd}
    \begin{tikzcd}
       X\lhd (A\otimes (B\otimes C))\arrow[r, "1 \lhd a_\ox"] \arrow[d, "a_\lhd"']
            & X \lhd ((A\otimes B) \otimes C)\arrow[d, "a_\lhd"] \\
            (X \lhd A) \lhd (B\otimes C )\arrow[d, "a_\lhd "']
            & (X \lhd (A\otimes B))\lhd C \arrow[ld, "a_\lhd\lhd 1"]\\
            ((X\lhd A)\lhd B)\lhd C
    \end{tikzcd}
    \begin{tikzcd}
        X \lhd (I \ox A)\arrow[r, "a_\lhd"] \arrow[d, "1 \lhd u^L_\otimes"']
        & (X \lhd I) \lhd A \arrow[ld, "u_\lhd \lhd 1"] \\
        X \lhd A
    \end{tikzcd}
\end{center}
\end{definition}
An example of an actegory, is \underline{Set}, with an action on itself given by the cartesian product. More generally any monoidal category acts on itself by the tensor product.

Similar to right $\bb{A}$-actegories, left $\bb{A}$-actegories are equipped with an action functor $\rhd : \bb{A}\x\bb{X}\to\bb{X}$ along with two natural isomorphisms $a_\rhd : (A\ox B)\rhd X \to A\rhd (B\rhd X)$ and $u_\rhd: I\rhd X \to X$ satisfying similar coherences.

A right $\bb{A}$-actegory is precisely a left $\bb{A}^{rev}$-actegory, where $\bb{A}^{rev}$ is the monoidal category obtained by equipping the underlying category of $\bb{A}$ with the tensor product $A \ox^{rev} B := B \ox A$ and appropriately adjusting the rest of the structure.

%\subsection{Enrichment}
\begin{definition}
\label{enriched-cat-def}
    A category $\bb{X}$ is a \textbf{right $\bb{A}$-enriched} category  \cite{borceux1994handbook} where $\bb{A}$ is a monoidal category if:
\begin{enumerate}
    \item For every pair of objects $X, Y \in \bb{X}$, there is a hom-object $ \textrm{Hom}(X, Y)\in\mathbb{A}$
    \item For every triple $X, Y, Z \in\mathbb{X}$, a ``composition" morphism \[m_{XYZ} :  \textrm{Hom}(X, Y) \otimes  \textrm{Hom}(Y, Z) \rightarrow  \textrm{Hom}(X, Z) \] exists in $\mathbb{A}$.
    \item for every object $X \in \mathbb{X}$ there is a ``unit" morphism $id: I \rightarrow  \textrm{Hom}(X, X)$ in $\mathbb{A}$.
Such that the following associativity and unit diagrams commute:
%\begin{enumerate}
    %\item Associativity of the composition:
    \begin{center}
    \scriptsize
    \hspace*{-1cm}
        \begin{tikzcd}[column sep = huge]
            &(\textrm{Hom}(X, Y)\otimes \textrm{Hom}(Y, Z))\otimes \textrm{Hom}(Z, W) 
            \arrow[rdd, "m_{XYZ}\otimes 1_{ \textrm{Hom}(Z, W)}"] \arrow[ldd, "a_\ox"']\\\\
            \textrm{Hom}(X, Y)\otimes (\textrm{Hom}(Y, Z))\otimes \textrm{Hom}(Z, W)) 
            \arrow[dd, "1_{\textrm{Hom}(X, Y)}\otimes m_{YZW}"']  
            &&\bb{X}(X, Z)\otimes\textrm{Hom}(Z, W)\arrow[dd, "m_{XZW}"] \\\\
            \textrm{Hom}(X, Y) \otimes\textrm{Hom}(Y, W) \arrow[rr, "m_{XYW}"'] 
            &&\textrm{Hom}(X, W)
        \end{tikzcd}
    \end{center}
    %\item Identity on the left and right:
    \begin{center}
    \footnotesize
        \begin{tikzcd}[column sep = huge]
            I \otimes  \textrm{Hom}(X, Y) \arrow[rd, "u_{\otimes}^L"'] \arrow[r, "id_x \otimes 1_{ \textrm{Hom}(X, Y)}"]
                &  \textrm{Hom}(X, X) \otimes  \textrm{Hom}(X, Y)\arrow[d, "m_{XXY}"] \\ & \textrm{Hom}(X, Y)
        \end{tikzcd}~~~
        \begin{tikzcd}[column sep = huge]
             \textrm{Hom}(X, Y) \otimes I \arrow[rd, "u_{\otimes}^R"'] \arrow[r, "1_{ \textrm{Hom}(X, Y)} \otimes id_x"]
                &  \textrm{Hom}(X, Y)\otimes  \textrm{Hom}(X, X) \arrow[d, "m_{XXY}"] \\ & \textrm{Hom}(X, Y)
        \end{tikzcd}
    \end{center}
\end{enumerate}
%\end{enumerate}
\end{definition}
In contrast to right enrichment, left enrichment is defined using the same data and conditions, with the only difference being that the "composition" morphism is given by $m_{XYZ} : \text{Hom}(Y, Z) \otimes \text{Hom}(X, Y) \to \text{Hom}(X, Z)$.
\begin{definition}
The \textbf{underlying ordinary category} of the $\bb{A}$-enriched category $\bb{X}$ is denoted by $\lc\bb{X}\rc$ and has the same objects as $\bb{X}$ while a map $f: X\to Y$ in $\lc\bb{X}\rc$ is just an element $\lc f\rc: I\to  \textrm{Hom}(X, Y)$. The composition of two maps $f:X\to Y$ and $g: Y\to Z$ in $\lc\bb{X}\rc$ is given by \cite{kelly1982basic}:
$$I \xrightarrow[]{(\utr)^{-1}} I\ox I \xrightarrow[]{\lc f\rc\ox \lc g\rc}  \textrm{Hom}(X, Y)\ox \textrm{Hom}(Y, Z)\xrightarrow[]{m}  \textrm{Hom}(X, Z)$$
and the identity $1_A: X\to X$ is $id: I\to  \textrm{Hom}(X, X)$.
\end{definition}
\begin{lemma}\label{swap_m} In any $\bb{A}$-enriched category $\bb{X}$, for all $f: X\to Y$ we have $(1\ox {\rm Hom}(1, f))m = m {\rm Hom}(1, f)$. 
\end{lemma}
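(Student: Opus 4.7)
The plan is to expand both sides using the definition $\textrm{Hom}(1,f) = (\utr)^{-1}(1 \ox \lc f\rc)m$ introduced in the proof of Proposition~\ref{copowers-give-actegories-w-hom-objs}, and to show that each reduces to the common normal form $(\utr)^{-1}(m \ox \lc f\rc)m$. Conceptually this lemma is just the assertion that post-composition with $f$ commutes with the enriched composition $m$, which is a mild repackaging of the associativity axiom for $m$.

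For the left-hand side $(1 \ox \textrm{Hom}(1,f))m$, I would first distribute the outer tensor via functoriality of $\ox$ to obtain
$$(1 \ox (\utr)^{-1})(1 \ox (1 \ox \lc f\rc))(1 \ox m)\,m.$$
I would then apply the associativity of enriched composition from Definition~\ref{enriched-cat-def} in the form $(1 \ox m)m = a_\ox(m \ox 1)m$, and slide $a_\ox$ leftwards past $1 \ox (1 \ox \lc f\rc)$ using naturality of $a_\ox$, producing $a_\ox(1 \ox \lc f\rc)$. The prefix $(1 \ox (\utr)^{-1})\,a_\ox$ then collapses to $(\utr)^{-1}$ by Mac Lane's coherence (equivalently, by the triangle axiom plus naturality of $\utr$). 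A final invocation of the interchange law to rewrite $(1 \ox \lc f\rc)(m \ox 1)$ as $m \ox \lc f\rc$ yields the target expression.

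For the right-hand side $m\,\textrm{Hom}(1,f) = m\,(\utr)^{-1}(1 \ox \lc f\rc)m$, naturality of $(\utr)^{-1}$ gives the identity $m(\utr)^{-1} = (\utr)^{-1}(m \ox 1)$, after which interchange again rewrites $(m \ox 1)(1 \ox \lc f\rc)$ as $m \ox \lc f\rc$, producing the same normal form $(\utr)^{-1}(m \ox \lc f\rc)m$.

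The main obstacle I anticipate is the careful invocation of the unit-associator coherence $(1 \ox (\utr)^{-1})a_\ox = (\utr)^{-1}$, since it is easy to miswrite by a factor of $a_\ox^{\pm 1}$ when bookkeeping the bracketings; apart from this, the proof is a routine chain of naturalities and a single interchange step, and it uses nothing beyond the enriched axioms together with Mac Lane coherence in $\bb{A}$.
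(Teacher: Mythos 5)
Your proposal is correct and follows essentially the same route as the paper's proof: the paper's single chain from $m\,\textrm{Hom}(1,f)$ to $(1\ox\textrm{Hom}(1,f))m$ passes through exactly the normal form $(\utr)^{-1}(m\ox\lc f\rc)m$ you identify, using the same ingredients (expanding $\textrm{Hom}(1,f)$, naturality of $(\utr)^{-1}$, interchange, associativity of $m$, and the unit--associator coherence). Presenting it as two reductions meeting in the middle rather than one chain is a cosmetic difference, and your caution about the orientation of $a_\ox$ is warranted only because the paper itself uses both conventions.
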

\vspace{0.2cm}
% \subsection{Parametrized Adjunction}
% \subsection{Closed Monoidal Categories}
% \begin{definition}
% A monoidal category $\bb{A}$ is said to be \textbf{closed} \cite{barr1990category} if for each
% object $A$ of $\bb{A}$, the functor $A\ox -$ has a right adjoint. If we denote the value at $C$ of the a adjoint by $A\multimap C$ ($A$ appears as a parameter since this is defined for each object $A$), then the condition for $\bb{A}$ to be \textbf{right closed} is the existance of the following bijective correspondence, natural in $B$ and $C$,

% $$
% \infer=[]{B\xrightarrow[]{g = \textrm{curry}(f)} A\multimap C}{A\ox B\xrightarrow[]{f = \textrm{uncurry}(g)} C}
% $$
% \end{definition}
%\subsection{Copowers}
Classically copowers are defined for right-closed categories, as described in the introduction.  However in this paper we will use the definition of copowers for non-closed categories:
\begin{definition}\label{copowers-def}
    An $\bb{A}$-enriched category $\bb{X}$ has \textbf{copowers} \cite{pare2012mealy} if for any $A\in\bb{A}$ and $X\in\bb{X}$, there exists an object $X\lhd A \in \bb{X}$ and a morphism $\eta: A\to  \textrm{Hom}(X, X\lhd A)$ in $\bb{A}$ (called the unit of the copower) such that for any $B\in\bb{A}$ and $Y\in\bb{X}$ there is a bijective correspondence:
    $$
    \infer=[]{A\otimes B \xrightarrow[]{g = (f)^\Downarrow = (\eta\ox f)m}  \textrm{Hom}(X, Y)}{B \xrightarrow[]{f = (g)^\Uparrow}\textrm{Hom}(X\lhd A, Y)}
    $$
\end{definition}
Another way to define copowers is by using two combinators $(-)^\Downarrow$ and $(-)^\Uparrow$
which satisfy the following properties:
\begin{center}
\begin{tabular}{ll}
(i)~$(1\ox h)(f)^\Downarrow = (hf)^\Downarrow$ & (ii) $h(f)^\Uparrow = ((1\ox h)f)^\Uparrow$ \\
(iii)~$((f)^\Downarrow \ox g)m = a_\ox((f\ox g)m)^\Downarrow$ &
(iv) $((f)^\Uparrow \ox g)m = (a_\ox(f\ox g)m)^\Uparrow$
\end{tabular}
\end{center}
%\begin{enumerate}[label=(\roman{*})]
%    \item $(1\ox h)(f)^\Downarrow = (hf)^\Downarrow$
%    \item $h(f)^\Uparrow = ((1\ox h)f)^\Uparrow$
%    \item $((f)^\Downarrow \ox g)m = a_\ox((f\ox g)m)^\Downarrow$ 
%    \item $((f)^\Uparrow \ox g)m = (a_\ox(f\ox g)m)^\Uparrow$
%\end{enumerate}
This definition is used throughout the proofs in this paper and is equivalent to the definition of copowers given in \ref{copowers-def}, as shown in the following lemma.
% \ref{eta_down_up_eq}. 
\begin{lemma}\label{eta_down_up_eq}
Given a bijective correspondence 
$$
\infer=[]{A\otimes B \xrightarrow[]{g = (f)^\Downarrow}  \textrm{Hom}(X, Y)}{B \xrightarrow[]{f = (g)^\Uparrow}\textrm{Hom}(X\lhd A, Y)}
$$
with the two combinators $(-)^\Downarrow$ and $(-)^\Uparrow$, the following are equivalent:
\begin{enumerate}[label=(\roman{*})]
    \item $\exists~\eta: A\to \textrm{Hom}(X, X\lhd A)$ and $(f)^\Downarrow = (\eta\ox f)m$\label{fdown-with-eta}
    \item The following properties hold for $(-)^\Downarrow$ and $(-)^\Uparrow$:\label{down-up-properties}
    \begin{enumerate}
        \item $(1\ox h)(f)^\Downarrow = (hf)^\Downarrow$
        \item $h(f)^\Uparrow = ((1\ox h)f)^\Uparrow$
        \item $((f)^\Downarrow \ox g)m = a_\ox((f\ox g)m)^\Downarrow$ 
        \item $((f)^\Uparrow \ox g)m = (a_\ox(f\ox g)m)^\Uparrow$
    \end{enumerate}
\end{enumerate}

\end{lemma}
\begin{proof}
We start with proving $\ref{fdown-with-eta}\Rightarrow\ref{down-up-properties}$. We can define $\eta := (\utr)^{-1}(id)^\Downarrow$ using the given bijection as below:
$$
\infer=[]{
\infer=[]{A\xrightarrow[]{(\utr)^{-1}}A\ox I\xrightarrow[]{(id)^\Downarrow} \textrm{Hom}(X, X\lhd A)}
{A\ox I \xrightarrow[]{(id)^\Downarrow} \textrm{Hom}(X, X\lhd A)}
}
{I\xrightarrow[]{id} \textrm{Hom}(X\lhd A, X\lhd A)}
$$
We have:
\begin{align*}
    (\eta\ox f)m 
    &= (((\utr)^{-1}(id)^\Downarrow)\ox f)m\\
    &= ((\utr)^{-1}\ox 1)(1\ox f)((id)^\Downarrow\ox 1)m\\
    &= ((\utr)^{-1}\ox 1)(1\ox f)a_\ox((id\ox 1)m)^\Downarrow\\
    &= ((\utr)^{-1}\ox 1)a_\ox(1\ox f)((id\ox 1)m)^\Downarrow\\
    &= ((\utr)^{-1}\ox 1)a_\ox(1\ox f)((\utl)^{-1})^\Downarrow\\
    &= ((\utr)^{-1}\ox 1)a_\ox(f(\utl)^{-1})^\Downarrow\\
    &= (1\ox (\utl)^{-1})(f(\utl)^{-1})^\Downarrow\\
    &= ((\utl)^{-1}f(\utl)^{-1})^\Downarrow\\
    &= (f(\utl)^{-1}\utl)^\Downarrow\\
    &= (f)^\Downarrow
\end{align*}
\\
Next, we prove $\ref{down-up-properties}\Rightarrow\ref{fdown-with-eta}$. We have:
    \begin{align*}
    (a)~~~~&\vspace{-0.75cm}
        (1\ox h)(f)^\Downarrow
        = (1\ox h)(\eta\ox f)m
        = (\eta\ox hf)m
        = (hf)^\Downarrow\\
    % \end{align*}
    % \begin{align*}\\
    (b)~~~~&h(g)^\Uparrow= ((h(g)^\Uparrow)^\Downarrow)^\Uparrow = ((\eta\ox h(g)^\Uparrow)m)^\Uparrow\\
        &= ((1\ox h)(\eta\ox(g)^\Uparrow)m)^\Uparrow\\
        &= ((1\ox h)((g)^\Uparrow)^\Downarrow)^\Uparrow\\
        &= ((1\ox h)g)^\Uparrow\\
    %\end{align*}
    %\begin{align*}
     (c)~~~~&((f)^\Downarrow\ox g)m
     = ((\eta\ox f)m\ox g)m\\
     &= a_\ox (\eta\ox (f\ox g)m)m\\
     &= a_\ox ((f\ox g)m)^\Downarrow\\
    % \end{align*}
    % \begin{align*}
     (d)~~~~&((f)^\Uparrow\ox g)m
     = ((((f)^\Uparrow\ox g)m)^\Downarrow)^\Uparrow\\
     &= ((\eta\ox ((f)^\Uparrow\ox g)m)m)^\Uparrow\\
     &= (a_\ox ((\eta\ox (f)^\Uparrow)m \ox g)m)^\Uparrow\\
     &= (a_\ox (((f)^\Uparrow)^\Downarrow \ox g)m)^\Uparrow\\
     &= (a_\ox (f\ox g)m)^\Uparrow
    \end{align*}
\end{proof}

\begin{corollary}\label{updown-all}
    Given the bijective correspondence discussed in Lemma $\ref{eta_down_up_eq}$, the following properties hold for $(-)^\Uparrow$ and $(-)^\Downarrow$:
    \begin{enumerate}[label=(\roman{*})]
        \item $(1\ox h)(f)^\Downarrow \textrm{Hom}(1, k) = (hf \textrm{Hom}(1, k))^\Downarrow$
        \item $h(g)^\Uparrow \textrm{Hom}(1, k) = ((1\ox h)g \textrm{Hom}(1, k))^\Uparrow$
    \end{enumerate}
\end{corollary}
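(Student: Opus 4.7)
The plan is to derive both clauses from the corresponding combinator rules in Lemma~\ref{eta_down_up_eq}(ii) together with naturality of the composition morphism $m$, i.e.\ Lemma~\ref{swap_m}. The key observation is that $\textrm{Hom}(1,k)$ is a morphism in $\bb{A}$ on the right-hand factor of a tensor product with the hom-object, so pushing it past $m$ is exactly what Lemma~\ref{swap_m} does; the combinator rules (a)--(d) then do the rest.

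For part (i), I would first apply Lemma~\ref{eta_down_up_eq}(ii)(a) to rewrite $(1\ox h)(f)^\Downarrow\textrm{Hom}(1,k)$ as $(hf)^\Downarrow\textrm{Hom}(1,k)$. Next, invoking the characterization $(-)^\Downarrow = (\eta\ox -)m$ from clause \ref{fdown-with-eta} of the lemma, this becomes $(\eta\ox hf)m\,\textrm{Hom}(1,k)$. Lemma~\ref{swap_m} then moves the $\textrm{Hom}(1,k)$ into the second tensor factor, yielding $(\eta\ox hf\,\textrm{Hom}(1,k))m$, which is $(hf\,\textrm{Hom}(1,k))^\Downarrow$ by definition. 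Each step is purely rewriting; there is no obstruction.

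For part (ii), rather than repeating a parallel calculation (which would require a dual ``$\Uparrow$-version'' of Lemma~\ref{swap_m}), I would exploit the bijection $(-)^\Uparrow \dashv (-)^\Downarrow$ to reduce to part (i). Apply $(-)^\Downarrow$ to both sides of the claimed equality $h(g)^\Uparrow\textrm{Hom}(1,k)=((1\ox h)g\,\textrm{Hom}(1,k))^\Uparrow$. On the right, the bijection collapses it to $(1\ox h)g\,\textrm{Hom}(1,k)$. On the left, part (i) applied with $f:=(g)^\Uparrow$ rewrites $(1\ox h)((g)^\Uparrow)^\Downarrow\textrm{Hom}(1,k)$ as $(h(g)^\Uparrow\textrm{Hom}(1,k))^\Downarrow$; since $((g)^\Uparrow)^\Downarrow = g$, the left-hand side under $(-)^\Downarrow$ also reduces to $(1\ox h)g\,\textrm{Hom}(1,k)$. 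Equality under $(-)^\Downarrow$ then forces equality by bijectivity.

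The only subtlety is making sure the types line up and that Lemma~\ref{swap_m} is being applied in the correct slot — namely, post-composing $\textrm{Hom}(1,k)$ after $m$ equals tensoring $\textrm{Hom}(1,k)$ into the appropriate hom-factor before $m$. No delicate coherence argument is required, so the proof is essentially mechanical once the bijection is used to transfer part (ii) to part (i).
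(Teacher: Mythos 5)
Your proof is correct and follows the route the paper intends (the corollary is left unproved there as an immediate consequence of Lemma~\ref{eta_down_up_eq} and Lemma~\ref{swap_m}): part (i) is property (a) plus $(f)^\Downarrow=(\eta\ox f)m$ and the swap identity $m\,\textrm{Hom}(1,k)=(1\ox\textrm{Hom}(1,k))m$, and part (ii) transfers to part (i) across the bijection using $((g)^\Uparrow)^\Downarrow=g$. All steps type-check and no gap remains.
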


%%%%%%%%%%%%%%%%%%%%%%%%%%%%%%%%%%%%%%%%%%%%%%%%%%%%%%%%%%%%%%%%%%%%%%%%%%
\section{Right actegories with hom-objects have copowers}
\begin{prop}\label{actegories-w-hom-objs-give-copowers}
    If $\bb{X}$ is a right $\bb{A}$-actegory with hom-objects, that is for any $X\in\bb{X}$ there is an adjunction $X\lhd - \dashv \textrm{Hom}(X, -): \bb{A}\to\bb{X}$ then $\mathbb{X}$ is a right $\bb{A}$-enriched category with copowers.
\end{prop}
\begin{proof}
We start with proving the enrichment.
In the right $\bb{A}$-actegory $\bb{X}$, we take `$m$' to be the composition morphisms and `$id$' to be the unit as below:\\
$$
\footnotesize
\begin{tikzcd}
    X\lhd \textrm{Hom}(X, Z)\arrow[r, "\epsilon"] 
        & Z \\
    X \lhd \textrm{Hom}(X,Y) \otimes \textrm{Hom}(Y,Z)\arrow[u, dashed, "1 \lhd m"] \arrow[ur, "a_\lhd (\epsilon \lhd 1) \epsilon"']
\end{tikzcd}
\hspace{1cm}
\begin{tikzcd}
    X\lhd \textrm{Hom}(X, X)\arrow[r, "\epsilon"] 
        & X \\
    X \lhd I \arrow[u, dashed, "1 \lhd id"] \arrow[ur, "u_\lhd"']
\end{tikzcd}
$$
In order to have an enrichment, we have to show the associativity and identity rules:
\begin{enumerate}
    \item Identity on the left:
    $$
    \scriptsize
    \begin{tikzcd}
    \arrow[drr, eq=eq:d1] 
    X \lhd \textrm{Hom}(X, Y)\arrow[rr, "\epsilon"] 
    && Y\\
    X \lhd \textrm{Hom}(X, X) \ox \textrm{Hom}(X, Y)\arrow[u, "1\lhd m"] \arrow[r, "a_\lhd"{name=A}] 
    & X \lhd \textrm{Hom}(X, X) \lhd \textrm{Hom}(X, Y)\arrow[r, "\epsilon\lhd 1"]
    &X\lhd \textrm{Hom}(X,Y)\arrow[u, "\epsilon"']\\
    X \lhd I\otimes \textrm{Hom}(X, Y)\arrow[u, "1 \lhd id \otimes 1"] \arrow[r, "a_\lhd"{name=B}]\arrow[urr, "1\lhd u^L_\otimes"'{name=E}, bend right=30]\arrow[, from=A, to=B, eq=eq:d2]
    &X \lhd I \lhd \textrm{Hom}(X, Y) \arrow[u, "1\lhd id \lhd 1"{name=C}] \arrow[ur, " u_\lhd \lhd 1"'{name=D}]\arrow[, from=C, to=D, eq=eq:d3]\\\\
    &\arrow[, from=B, to=E, eq=eq:d4] 
    \end{tikzcd}
    $$
    square \ref{eq:d1} is the definition of $m$, square \ref{eq:d2} commutes because $a_\lhd$ is natural, triangle \ref{eq:d3} is the definition of $id$ and finally \ref{eq:d4} commutes because of \ref{actegory-def}. So the whole diagram above commutes, and thus we have $\tiny (1\lhd id_X \otimes 1)(1\lhd m_{XXY})\epsilon = (1\lhd u^L_\otimes ) \epsilon$.
    Since $\epsilon$ is the counit of the adjunction, the couniversal property gives:
    $$
    \footnotesize
    \begin{tikzcd}
    X \lhd \textrm{Hom}(X, Y)\arrow[r, "\epsilon"] 
        & Y \\
    X \lhd I \otimes \textrm{Hom}(X, Y)\arrow[u, dashed, "1\lhd u^L_\otimes "] \arrow[ur, "(1\lhd u^L_\otimes)\epsilon"']
    \end{tikzcd}
    \hspace{1cm}
    \begin{tikzcd}
    X \lhd \textrm{Hom}(X, Y)\arrow[r, "\epsilon"] 
        & Y \\
    X \lhd I \otimes \textrm{Hom}(X, Y)\arrow[u, dashed, "(1\lhd id \otimes 1)( 1\lhd m)"] \arrow[ur, "(1\lhd id \otimes 1)( 1\lhd m)\epsilon = (1\lhd u^L_\otimes)\epsilon"']
    \end{tikzcd}
    $$
    Since the map from $X \lhd I\otimes \textrm{Hom}(X, Y)$ to $X\lhd \textrm{Hom}(X, Y)$ must be unique, then we have $1 \lhd u^L_\otimes = (1 \lhd id \otimes 1)(1\lhd m_{XXY})$ which proves the left identity rule.
    \item Identity on the right:
    $$
    \scriptsize
    \begin{tikzcd}
    \arrow[drr, eq=eq:d21]
    X \lhd \textrm{Hom}(X, Y)\arrow[rr, "\epsilon"{name=A}] 
    && Y \\
    \arrow[dr, eq=eq:d22]
    X \lhd \textrm{Hom}(X, Y) \otimes \textrm{Hom}(Y, Y)\arrow[u, "1 \lhd m"] \arrow[r, "a_\lhd"]
    & X \lhd \textrm{Hom}(X, Y)\lhd \textrm{Hom}(Y, Y) \arrow[r, "\epsilon\lhd 1"]
    &Y \lhd \textrm{Hom}(Y,Y)\arrow[u, "\epsilon"']\\
    X \lhd \textrm{Hom}(X, Y)\otimes I \arrow[u, "1\lhd 1\otimes id"] \arrow[r, "a_\lhd"{name=C}]\arrow[urr, "1\lhd u^R_\otimes"'{name=D}, bend right=30]
    &X \lhd \textrm{Hom}(X, Y)\lhd I\arrow[u, "1\lhd 1\lhd id_Y"{name=A}] \arrow[ur, "1\lhd u_\lhd"'{name=B}]\arrow[, from=A, to=B, eq=eq:d23]\\\\
    &\arrow[, from=C, to=D, eq=eq:d24]
    \end{tikzcd}
    $$
    square \ref{eq:d21} is the definition of $m$, square \ref{eq:d22} commutes because $a_\lhd$ is natural, triangle \ref{eq:d23} is the definition of $id$ and finally \ref{eq:d24} commutes because of \ref{actegory-def}. Thus the whole diagram above commutes, so we have $(1\lhd 1\otimes id)(1 \lhd m)\epsilon = (1\lhd u^R_\otimes) \epsilon$. Since $\epsilon$ is the counit of the adjunction, the couniversal property gives:
    $$
    \footnotesize
    \begin{tikzcd}
    X \lhd \textrm{Hom}(X, Y) \arrow[r, "\epsilon"] 
        & Y \\
    X \lhd \textrm{Hom}(X, Y)\otimes I \arrow[u, dashed, "1\lhd u^R_\otimes"] \arrow[ur, "(1 \lhd u^R_\otimes)\epsilon"']
    \end{tikzcd}
    \hspace{1cm}
    \begin{tikzcd}
    X \lhd \textrm{Hom}(X, Y)\arrow[r, "\epsilon"] 
        & Y \\
    X\lhd \textrm{Hom}(X, Y)\otimes I \arrow[u, dashed, "(1\lhd 1\otimes id)(1 \lhd m_{XYY})"] \arrow[ur, "(1\lhd 1\otimes id)(1 \lhd m)\epsilon = (1 \lhd u^R_\otimes)\epsilon"']
    \end{tikzcd}
    $$
    Since the map from $X\lhd \textrm{Hom}(X, Y)\otimes I$ to $X \lhd \textrm{Hom}(X, Y) $ must be unique, then we have:
    $1\lhd u^R_\otimes = (1\lhd 1\otimes id_Y)(1 \lhd m_{XYY})$
    which proves the identity on the right.

    \item Associativity of composition is given by:
    \begin{align*}
        (1\lhd (1\otimes m))(1\lhd m)\epsilon
        & = (1\lhd (1\otimes m))a_\lhd(\epsilon\lhd1)\epsilon\tag{def of $m$} \\
        & = a_\lhd(\epsilon\lhd m)\epsilon\tag{naturality of $a_\lhd$}\\
        & = a_\lhd(\epsilon\lhd 1)a_\lhd(\epsilon\lhd 1)\epsilon\tag{def of $m$}\\
        & = a_\lhd a_\lhd((\epsilon\lhd1)\lhd 1)(\epsilon\lhd 1)\epsilon\tag{naturality of $a_\lhd$}\\
        & = (1\lhd a_\otimes)a_\lhd(a_\lhd \lhd 1)((\epsilon\lhd1)\lhd 1)(\epsilon\lhd 1)\epsilon\tag{associativity of $a_\lhd$}\\
        & = (1\lhd a_\otimes)a_\lhd((1\lhd m)\lhd 1)(\epsilon\lhd 1)\epsilon\tag{def of $m$}\\
        & = (1\lhd a_\otimes)(1\lhd (m\ox 1))a_\lhd(\epsilon\lhd 1)\epsilon\tag{naturality of $a_\lhd$}\\
        & = (1\lhd a_\otimes)(1\lhd (m\ox 1))(1\lhd m)\epsilon\tag{def of $m$}\\
    \end{align*}
    Since $\epsilon$ is the counit of the adjunction, the couniversal property gives:
    $$
    \scriptsize
    \begin{tikzcd}
    X\lhd \textrm{Hom}(X, W)\arrow[r, "\epsilon"] 
    & W \\
    X \lhd \textrm{Hom}(X,Y)\otimes \textrm{Hom}(Y,Z)\otimes \textrm{Hom}(Z,W)\arrow[u, dashed, "1\lhd (1\ox m))(1\lhd m)"] \arrow[ur, "(1\lhd (1\ox m))(1\lhd m)\epsilon"']
    \end{tikzcd}
    $$
    $$
    \scriptsize
    \begin{tikzcd}
     X \lhd \textrm{Hom}(X, W)\arrow[r, "\epsilon"] 
     & W \\
     X \lhd (\textrm{Hom}(X,Y)\otimes \textrm{Hom}(Y,Z)\otimes \textrm{Hom}(Z,W)\arrow[u, dashed, "(1\lhd a_\otimes)(1\lhd(m\ox 1))(1\lhd m)"] \arrow[ur, "(1\lhd a_\otimes)(1\lhd(m\ox 1))(1\lhd m)\epsilon = (1\lhd (1\ox m))(1\lhd m)\epsilon"']
    \end{tikzcd}
    $$
    \\
    Since the map from $X \lhd (Hom(X,Y)\otimes \textrm{Hom}(Y,Z)\otimes \textrm{Hom}(Z,W)$ to $X \lhd \textrm{Hom}(X, W)$ must be unique, we have $ (1\lhd a_\otimes)(1\lhd(m_{XYZ}\ox 1))(1\lhd m_{XZW}) = (1\lhd (1\ox m_{YZW}))(1\lhd m_{XYW})$
    which proves the associativity of the composition.
\end{enumerate}
Thus, we have an enrichment of the category $\mathbb{X}$ in the category $\mathbb{A}$.\\
Next, we prove that $\bb{X}$ has copowers over $\bb{A}$. Using the adjunction, we define the $(-)^\Uparrow$ and $(-)^\Downarrow$:
$$
\infer[]{
A\ox B\xrightarrow[]{(a_\lhd f^\sharp)^\flat} \textrm{Hom}(X, Y)
}{X\lhd A\ox B\xrightarrow[]{a_\lhd} \infer[]{X\lhd A\lhd B \xrightarrow[]{f^\sharp} Y}{B\xrightarrow[]{f} \textrm{Hom}(X\lhd A, Y)}}
\hspace{1cm}
\infer[]{
B\xrightarrow[]{(a^{-1}_\lhd g^\sharp)^\flat } \textrm{Hom}(X\lhd A, Y)
}{X\lhd A\lhd B \xrightarrow[]{a^{-1}_\lhd}\infer[]{X\lhd A\ox B\xrightarrow[]{g^\sharp} Y}{A\ox B\xrightarrow[]{g} \textrm{Hom}(X, Y)}}
$$
\\
$(f)^\Downarrow = (a_\lhd f^\sharp)^\flat$ and $(g)^\Uparrow = (a^{-1}_\lhd g^\sharp)^\flat$. We leave it as a straightforward exercise to verify that $((f)^\Downarrow)^\Uparrow = f$ and $((g)^\Uparrow)^\Downarrow = g$.
We take $\eta := 1^\flat$ (the unit of the adjunction). We have to prove that $(f)^\Downarrow = (\eta \ox f)m$:
\begin{align*}
    (f)^\Downarrow
    &= (a_\lhd f^\sharp)^\flat = (a_\lhd (1\lhd f)\epsilon)^\flat\tag{$f^\sharp = (1\lhd f)\epsilon $}\\
    &= ((1\lhd(1\ox f))a_\lhd\epsilon)^\flat\tag{naturality of $a_\lhd$}\\
    &= (1\ox f)a_\lhd^\flat  \textrm{Hom}(1, \epsilon) = (1\ox f)((1\lhd (\eta\ox\eta)m)\epsilon)^\flat \textrm{Hom}(1, \epsilon)\\
    &= (1\ox f)(\eta\ox\eta)m\epsilon^\flat \textrm{Hom}(1, \epsilon) = (1\ox f)(\eta\ox\eta)m(1^\sharp)^\flat \textrm{Hom}(1, \epsilon)\\
    &= (1\ox f)(\eta\ox\eta)m \textrm{Hom}(1, \epsilon) = (\eta\ox f)(1\ox \eta \textrm{Hom}(1, \epsilon))m\tag{Lemma \ref{swap_m}}\\
    &= (\eta\ox f)m\tag{triangle identity}
\end{align*}
\end{proof}
\section{Categories with copowers are right actegories with hom-objects}
In the previous section, we established that any right actegory equipped with hom-objects admits copowers. In this section, we address the converse direction of Theorem \ref{actegory-w-hom-powers-eqiv}. Before doing so, we first present a lemma on parametrized adjunctions, which will be used extensively throughout the proof.

\begin{lemma}\label{eta-epsilon-dinatural}
    If $F: \bb{Z}\x\bb{X}\to\bb{Y}$ is a functor and if, for each $X\in\bb{X}$, $F(-, X)$ has a right adjoint such that $(\eta, \epsilon):F(-, X) \dashv G(X, -): \bb{Z}\to\bb{Y}$ then $G: \bb{X}^{op}\x\bb{Y}\to\bb{Z}$ is a functor where, for all $f\in\bb{X}$ and $g\in\bb{Y}$, $G(f, g) := (F(1, f)\epsilon g)^\flat$. Furthermore, $\eta$ and $\epsilon$ are dinatural, meaning that $F(f, 1)\epsilon = F(1, G(f, 1))\epsilon$ and $\eta G(f, 1) = \eta G(1, F(f, 1))$
\end{lemma}

\begin{prop}\label{copowers-give-actegories-w-hom-objs}
In any right $\bb{A}$-enriched category $\bb{X}$ with copowers there is, for each $X\in \bb{X}$, an adjunction $X\lhd - \dashv \textrm{Hom}(X, -): \bb{A}\to\bb{X}$ and,  furthermore, $\bb{X}$ is a right $\bb{A}$-actegory with respect to $\lhd$.
\end{prop}
\begin{proof}
First, we have to prove that $\textrm{Hom}(X, -)$ is a functor which is straightforward using $\textrm{Hom}(1, f) = (\utr)^{-1}(1\ox\lc f\rc)m$.
Next we prove that the unit of the copower $\eta: A\to \textrm{Hom}(X, X\lhd A)$ has the universal property.  Using the property of copowers, we can write $\lc f^\sharp \rc = (\utr f)^\Uparrow$.
% $$
% \infer[]{   
%         {
%             I\xrightarrow[]{\lc f^\sharp \rc = (\utr f)^\Uparrow}Hom(X\lhd A, Y)
%         }
%     }
%     {A\ox I \xrightarrow[]{\utr} A \xrightarrow[]{f} \textrm{Hom}(X, y)}
% $$
So we have:
\begin{align*}
    \eta \textrm{Hom}(1, f^\sharp)
    &= (\utr)^{-1}(\eta\ox \lc f^\sharp \rc)m = (\utr)^{-1}(\eta\ox (\utr f)^\Uparrow)m = (\utr)^{-1}((\utr f)^\Uparrow)^\Downarrow = (\utr)^{-1}\utr f = f
\end{align*}
which proves the universal property of $\eta$.
% $$
% \begin{tikzcd}
%     A\arrow[r, "\eta"] \arrow[rd, "f"']
%     & \textrm{Hom}(X, X\lhd A)\\
%     & \textrm{Hom}(X, Y)\arrow[u, "Hom({1, f^\sharp})"']
% \end{tikzcd}
% $$
It remains to prove that for any $f$, $Hom(1, f^\sharp)$ is unique. Suppose there is another map $\textrm{Hom}(1, g): \textrm{Hom}(X, Y)\to \textrm{Hom}(X, X\lhd A)$, such that $\eta \textrm{Hom}(1, g) = f$. We have:
\begin{align*}
    \eta \textrm{Hom}(1, g)
    &= (\utr)^{-1}(\eta \ox \lc g \rc)m = (\utr)^{-1}(\lc g \rc)^\Downarrow = f
\end{align*}
which means $(\lc g \rc)^\Downarrow = \utr f$ and by applying $(-)^\Uparrow$ on both sides we get $\lc g \rc = (\utr f)^\Uparrow = \lc f^\sharp \rc$ which means $\textrm{Hom}(1, g) = \textrm{Hom}(1, f^\sharp)$ and this proves the uniqueness of $\textrm{Hom}(1, f^\sharp)$.
Thus $\eta$ has the universal property. As a result we have an adjunction and $X\lhd -: \bb{A}\to\bb{X}$ is a functor with $\epsilon: X\lhd \textrm{Hom}(X, Y)\to Y$ the counit.  Next we prove that $\bb{X}$ is an $\bb{A}$-actegory.
    $a_\lhd: X\lhd (A\ox B)\to X\lhd A\lhd B$ is defined as:
    \begin{center}
        \scriptsize
        \begin{tikzcd}
            X\lhd (A\ox B) \arrow[rr, "1\lhd(\eta\ox\eta)"]\arrow[rrdd, "a_\lhd"']
            && X\lhd ( \textrm{Hom}(X, X\lhd A)\ox  \textrm{Hom}(X\lhd A, X\lhd A\lhd B)) \arrow[d, "1\lhd m"]\\
            && X\lhd  \textrm{Hom}(X, X\lhd A\lhd B)\arrow[d, "\epsilon"]\\
            && X\lhd A\lhd B
        \end{tikzcd}
    \end{center}
    $a^{-1}_\lhd: X\lhd A\lhd B\to X\lhd (A\ox B)$ can be expressed as a map $\lc a^{-1}_\lhd \rc: I\to \textrm{Hom}(X\lhd A\lhd B, X\lhd (A\ox B))$ in the underlying ordinary category $\lc\bb{X}\rc$, which can be written as $\lc a_\lhd^{-1}\rc = (u^R_\ox (\eta)^\Uparrow)^\Uparrow$ using the property of copowers.
    % $$
    % \infer[]
    % {\infer[]
    %     {
    %     \infer[]
    %         {I\xrightarrow[]{(u^R_\ox (\eta)^\Uparrow)^\Uparrow} \textrm{Hom}(X\lhd A\lhd B, X\lhd (A\ox B))}
    %         {B\ox I\xrightarrow[]{u^R_\ox (\eta)^\Uparrow} \textrm{Hom}(X\lhd A, X\lhd (A\ox B))}
    %     }
    %     {B\xrightarrow[]{(\eta)^\Uparrow} \textrm{Hom}(X\lhd A, X\lhd (A\ox B))}
    % }
    % {A\ox B\xrightarrow[]{\eta} \textrm{Hom}(X, X\lhd (A\ox B))}
    % $$
    \vspace{-0.5cm}
  \begin{align*}
    a_\lhd a^{-1}_\lhd &= ((1\lhd (\eta \ox \eta)m)\epsilon) a^{-1}_\lhd = (1\lhd (\eta\ox\eta)m  \textrm{Hom}(1, a^{-1}_\lhd))\epsilon \tag{naturality of $\epsilon$}\\
    &= (1\lhd (\eta\ox\eta \textrm{Hom}(1, a^{-1}_\lhd))m)\epsilon\tag{lemma \ref{swap_m}} \\
    &= (1\lhd (\eta\ox((\utr)^{-1}(\eta\ox\lc a^{-1}_\lhd\rc)m))m)\epsilon \tag{corollary \ref{updown-all}}\\ 
    &= (1\lhd (\eta\ox((\utr)^{-1}(\lc a^{-1}_\lhd\rc)^\Downarrow)m)\epsilon \tag{definition \ref{copowers-def}}\\
    &= (1\lhd (\eta\ox((\utr)^{-1}((\utr(\eta)^\Uparrow)^\Uparrow)^\Downarrow)m)\epsilon \tag{def of $\lc a^{-1}_\lhd \rc$}\\
    &= (1\lhd (\eta\ox((\utr)^{-1}\utr(\eta)^\Uparrow)m)\epsilon = (1\lhd (\eta\ox (\eta)^\Uparrow) m)\epsilon\\
    &= (1\lhd ((\eta)^\Uparrow)^\Downarrow)\epsilon \tag{definition \ref{copowers-def}}\\
    &= (1\lhd \eta)\epsilon = 1 \tag{triangle identity}
  \end{align*}
\vspace{-1cm}
  \begin{align*}
    a^{-1}_\lhd a_\lhd &= (1\lhd ((\utr)^{-1} (\eta \ox \lc a^{-1}_\lhd\rc)m ))\epsilon a_\lhd \\
    &= (1\lhd ((\utr)^{-1} (\eta \ox \lc a^{-1}_\lhd\rc)m  \textrm{Hom}(1, a_\lhd))\epsilon \tag{naturality of $\epsilon$}\\
    &= (1\lhd ((\utr)^{-1} ((\eta \ox \lc a^{-1}_\lhd\rc \textrm{Hom}(1, a^{-1}_\lhd))m)\epsilon\tag{lemma \ref{swap_m}}\\
    &= (1\lhd ((\utr)^{-1} ((\eta \ox (\utr(\eta)^\Uparrow)^\Uparrow \textrm{Hom}(1, a_\lhd))m)\epsilon \tag{def of $\lc a^{-1}_\lhd \rc$}\\ 
    &= (1\lhd ((\utr)^{-1} ((\eta \ox (\utr(\eta)^\Uparrow \textrm{Hom}(1, a_\lhd))^\Uparrow)m)\epsilon \tag{corollary \ref{updown-all}}\\
    &= (1\lhd ((\utr)^{-1} ((\eta \ox (\utr(\eta \textrm{Hom}(1, a_\lhd))^\Uparrow)^\Uparrow)m)\epsilon \tag{corollary \ref{updown-all}}\\
    &= (1\lhd ((\utr)^{-1} ((\eta \ox (\utr(\eta \textrm{Hom}(1, (1\lhd (\eta)^\Downarrow)\epsilon)))^\Uparrow)^\Uparrow)m)\epsilon \tag{def of $a_\lhd$}\\
    &= (1\lhd ((\utr)^{-1} ((\eta \ox (\utr((\eta)^\Downarrow \eta \textrm{Hom}(1,\epsilon)))^\Uparrow)^\Uparrow)m)\epsilon\tag{naturality of $\eta$}\\
    &= (1\lhd ((\utr)^{-1} ((\eta \ox (\utr((\eta)^\Downarrow)^\Uparrow)^\Uparrow)m)\epsilon\tag{triangle identity}\\
    &= (1\lhd ((\utr)^{-1} ((\eta \ox (\utr\eta)^\Uparrow)m)\epsilon = (1\lhd ((\utr)^{-1}((\utr\eta)^\Uparrow)^\Downarrow)\epsilon  \tag{definition \ref{copowers-def}}\\
    &= (1\lhd ((\utr)^{-1}\utr\eta)\epsilon = (1\lhd\eta)\epsilon = 1  \tag{triangle identity}\\
  \end{align*}
  So we can conclude that $a_\lhd$ is an isomorphism.
  In order to have an actegory, the coherence diagrams for $a_\lhd$ (see Definition \ref{actegory-def}) must hold. Based on the definition of $a_\lhd$, on the left-hand-side of the middle diagram of \ref{actegory-def} we have:
  \begin{align*}
      (1\lhd a_\ox) a_\lhd a_\lhd
      &= (1\lhd a_\ox)(1\lhd (\eta\ox\eta)m)\epsilon a_\lhd = (1\lhd a_\ox(\eta\ox\eta)m)\epsilon a_\lhd\\
      &= (1\lhd a_\ox(\eta\ox\eta)m \textrm{Hom}(1, a_\lhd))\epsilon\tag{naturality of $\epsilon$}\\
      &= (1\lhd a_\ox(\eta\ox\eta \textrm{Hom}(1, a_\lhd))m)\epsilon\tag{Lemma \ref{swap_m}}\\
      &= (1\lhd a_\ox(\eta\ox\eta \textrm{Hom}(1, (1\lhd(\eta\ox\eta)m)\epsilon))m)\epsilon\tag{def of $a_\lhd$}\\
      &= (1\lhd a_\ox(\eta\ox(\eta\ox\eta)m\eta \textrm{Hom}(1,\epsilon))m)\epsilon \tag{naturality of $\eta$}\\
      &= (1\lhd a_\ox(\eta\ox(\eta\ox\eta)m)m)\epsilon \tag{triangle identity}\\
  \end{align*}
  And on the right-hand-side we have:
  \begin{align*}
      a_\lhd (a_\lhd \lhd 1)
      &= (1\lhd (\eta\ox\eta)m)\epsilon (a_\lhd \lhd 1) \tag{definition of $a_\lhd$}\\
      &= (1\lhd (\eta\ox\eta)m \textrm{Hom}(1, a_\lhd \lhd 1))\epsilon \tag{naturality of $\epsilon$}\\
      &= (1\lhd (\eta\ox\eta \textrm{Hom}(1, a_\lhd \lhd 1))m)\epsilon \tag{Lemma \ref{swap_m}}\\
      &= (1\lhd (\eta\ox\eta \textrm{Hom}(a_\lhd, 1))m)\epsilon \tag{lemma \ref{eta-epsilon-dinatural}}\\
      &= (1\lhd (\eta\ox((\utr)^{-1}(\lc a_\lhd\rc\ox\eta)m))m)\epsilon\\
      &= (1\lhd ((1\ox(\utr)^{-1})(\eta\ox(\lc a_\lhd\rc\ox\eta))m)m)\epsilon\\
      &= (1\lhd ((\utr)^{-1}(\eta\ox\lc a_\lhd\rc)m\ox\eta))m)\epsilon \tag{associativity of $m$}\\
      &= (1\lhd ((\eta \textrm{Hom}(1, a_\lhd)\ox\eta))m)\epsilon\\
      &= (1\lhd ((\eta \textrm{Hom}(1, (1\lhd(\eta\ox\eta)m)\epsilon)\ox\eta))m)\epsilon \tag{def of $a_\lhd$}\\
      &= (1\lhd (((\eta\ox\eta)m\eta \textrm{Hom}(1, \epsilon)\ox\eta))m)\epsilon \tag{naturality of $\eta$}\\
      &= (1\lhd (((\eta\ox\eta)m\ox\eta))m)\epsilon \tag{triangle identity}\\      
      &= (1\lhd (((\eta\ox(\eta\ox\eta)m)m)\epsilon \tag{associativity of $m$}\\
  \end{align*}
  So the right-hand-side and the left-hand-side of the middle diagram are equal and thus the coherence for $a_\lhd$ holds. In addition, $a_\lhd$ needs to be natural in every argument. So the following diagrams must commute:
  \vspace*{-0.25cm}
  \begin{center}
    \scriptsize
        \begin{tikzcd}
          X\lhd (A\ox B) \arrow[rr, "a_\lhd"{name=A}] \arrow[dd, "1\lhd (1\ox h)"']
          && (X\lhd A) \lhd B \arrow[dd, "1\lhd h"]\\\\
          X\lhd (A\ox B')\arrow[rr, "a_\lhd"'{name=B}]\arrow[, from=A, to=B, eq=eq:h]
          &&(X\lhd A)\lhd B'
        \end{tikzcd}
        \begin{tikzcd}
          X\lhd (A\ox B) \arrow[rr, "a_\lhd"{name=A}] \arrow[dd, "1\lhd (g\ox1)"']
          && (X\lhd A) \lhd B \arrow[dd, "(1\lhd g) \lhd 1"]\\\\
          X\lhd (A'\ox B)\arrow[rr, "a_\lhd"'{name=B}]\arrow[,from=A, to=B, eq=eq:g]
          &&(X\lhd A')\lhd B
        \end{tikzcd}
        \begin{tikzcd}
          X\lhd (A\ox B) \arrow[rr, "a_\lhd"{name=A}] \arrow[dd, "f\lhd 1"']
          && (X\lhd A) \lhd B \arrow[dd, "(f\lhd 1) \lhd 1"]\\\\
          Y\lhd (A\ox B)\arrow[rr, "a_\lhd"'{name=B}]\arrow[,from=A, to=B, eq=eq:f]
          &&(Y\lhd A)\lhd B
        \end{tikzcd}
  \end{center}
  % \vspace{-0.25cm}
   
    % \vspace{-0.85cm}
    \begin{align*}
        {\rm For}~(\ref{eq:f}):~~a_\lhd ((f\lhd 1)\lhd 1) 
        &= (1\lhd(\eta\ox\eta)m)\epsilon ((f\lhd 1)\lhd 1)\\
        &= (1\lhd(\eta\ox\eta)m  \textrm{Hom}(1, (f\lhd 1)\lhd 1))\epsilon\tag{naturality of $\epsilon$}\\
        &= (1\lhd(\eta\ox(\eta \textrm{Hom}(1, (f\lhd 1)\lhd 1)))m)\epsilon\tag{lemma \ref{swap_m}}\\
        &= (1\lhd(\eta\ox(\eta \textrm{Hom}(f\lhd 1, 1)))m)\epsilon\tag{lemma \ref{eta-epsilon-dinatural}}\\
        &= (1\lhd(\eta\ox((\utl)^{-1}(\lc f\lhd 1\rc\ox\eta)m))m)\epsilon\\
        &= (1\lhd(1\ox (\utl)^{-1})(\eta\ox((\lc f\lhd 1\rc\ox\eta)m))m)\epsilon\\
        &= (1\lhd(1\ox (\utl)^{-1})a_\ox((\eta\ox\lc f\lhd 1\rc)m \ox\eta)m)\epsilon \tag{associativity of $m$}\\
        &= (1\lhd((\utr)^{-1}(\eta\ox\lc f\lhd 1\rc)m \ox\eta)m)\epsilon\\
        &= (1\lhd((\eta \textrm{Hom}(1, f\lhd 1))\ox\eta)m)\epsilon\\
        &= (1\lhd((\eta \textrm{Hom}(f, 1))\ox\eta)m)\epsilon\tag{lemma \ref{eta-epsilon-dinatural}}\\
        &= (1\lhd(((\utr)^{-1}(\lc f\rc\ox\eta)m)\ox\eta)m)\epsilon\\
        &= (1\lhd(\utr)^{-1}(\lc f\rc\ox(\eta\ox\eta)m)m)\epsilon\tag{associativity of $m$}\\
        &= (1\lhd(\utr)^{-1}(\lc f\rc\ox(\eta\ox\eta)m\eta \textrm{Hom}(1, \epsilon))m)\epsilon\tag{triangle identity}\\
        &= (1\lhd(\utr)^{-1}(\lc f\rc\ox\eta \textrm{Hom}(1, 1\lhd ((\eta\ox\eta)m)\epsilon))m)\epsilon\tag{naturality of $\eta$}\\
        &= (1\lhd(\utr)^{-1}(\lc f\rc\ox\eta \textrm{Hom}(1, a_\lhd))m)\epsilon\tag{def of $a_\lhd$}\\
        &= (1\lhd(\utr)^{-1}(\lc f\rc\ox\eta)m \textrm{Hom}(1, a_\lhd))\epsilon\tag{lemma \ref{swap_m}}\\
        &= (1\lhd(\utr)^{-1}(\lc f\rc\ox\eta)m)\epsilon a_\lhd\tag{naturality of $\epsilon$}\\
        &= (1\lhd(\eta \textrm{Hom}(f, 1)))\epsilon a_\lhd= (f\lhd 1) a_\lhd\\
    \end{align*}
    
     \begin{align*}
        {\rm For}~(\ref{eq:h}):~~a_\lhd (1\lhd h) 
        &= (1\lhd (\eta\ox\eta)m)\epsilon (1\lhd h) = (1\lhd (\eta\ox\eta)m \textrm{Hom}(1, 1\lhd h))\epsilon\tag{naturality of $\epsilon$}\\
        &= (1\lhd (\eta\ox(\eta \textrm{Hom}(1, 1\lhd h)))m)\epsilon\\
        &= (1\lhd(\eta\ox(h\eta))m)\epsilon \tag{naturality of $\eta$} = (1\lhd (1\ox h)(\eta\ox\eta)m)\epsilon\\
        &= (1\lhd (1\ox h))(1\lhd (\eta\ox\eta)m)\epsilon = (1\lhd (1\ox h))a_\lhd \tag{def of $a_\lhd$}\\
    \end{align*}
    % \vspace{-0.85cm}
    \begin{align*}
        {\rm For}~(\ref{eq:g}):~~a_\lhd ((1\lhd g) \lhd 1)
        &= (1\lhd (\eta\ox\eta)m)\epsilon ((1\lhd g)\lhd 1)\\
        &= (1\lhd (\eta\ox\eta)m \textrm{Hom}(1, ((1\lhd g)\lhd 1)))\epsilon\tag{naturality of $\epsilon$}\\
        &= (1\lhd (\eta\ox(\eta \textrm{Hom}(1, ((1\lhd g)\lhd 1))))m)\epsilon \tag{lemma \ref{swap_m}}\\
        &= (1\lhd (\eta\ox(\eta \textrm{Hom}((1\lhd g), 1)))m)\epsilon \tag{lemma \ref{eta-epsilon-dinatural}}\\
        &= (1\lhd (\eta\ox((\utr)^{-1}(\lc 1\lhd g\rc\ox\eta)m))m)\epsilon\\
        &= (1\lhd (((\utr)^{-1}(\eta\ox\lc 1\lhd g\rc)m)\ox\eta))m)\epsilon\tag{associativity of $m$}\\
        &= (1\lhd (\eta \textrm{Hom}(1, 1\lhd g)\ox\eta)m)\epsilon = (1\lhd (g\eta\ox\eta)m)\epsilon\tag{naturality of $\eta$}\\
        &= (1\lhd (g\ox1)(\eta\ox\eta)m)\epsilon = (1\lhd (g\ox 1))a_\lhd\\
    \end{align*}
    % \vspace{-0.1cm}
    So $a_\lhd$ is natural. The other map which is required for forming an actegory is $u_\lhd: X\lhd I\to I$ which is defined by $u_\lhd := (1\lhd id)\epsilon$.
    % \begin{center}
    %     \begin{tikzcd}
    %         X\lhd I \arrow[r, "1\lhd id"]\arrow[rd, "u_\lhd"']
    %         & X\lhd  \textrm{Hom}(X, X)\arrow[d, "\epsilon"]\\
    %         & X
    %     \end{tikzcd}     
    % \end{center}
    $u_\lhd$ can also be expressed as a map $\lc u_\lhd \rc: I \to \textrm{Hom}(X\lhd I, X)$ in the category $\lc\bb{X}\rc$, which can be defined using the property of copowers as $u_\lhd := (\utl id)^\Uparrow$.
    % $$
    % \infer[]
    % {I\ox I \xrightarrow[]{\utl}I\xrightarrow[]{id} \textrm{Hom}(X, X)}
    % {I\xrightarrow[]{u_\lhd = (\utl id)^\Uparrow} \textrm{Hom}(X\lhd I, X)}
    % $$
    The name of the inverse of the $u_\lhd$ is $u^{-1}_\lhd:= \eta : X\to X\lhd I$.   
    We must show that $u^{-1}_\lhd$ is inverse to $u_\lhd$ that is:

    \begin{align*}
        u_\lhd u^{-1}_\lhd
        &= (\utr)^{-1}(\lc u_\lhd \rc \ox \lc u^{-1}_\lhd \rc)m = (\utr)^{-1}((\utl id)^\Uparrow \ox \eta)m\\
        &= (\utr)^{-1}(((\utl id)\ox \eta)m)^\Uparrow = ((1\ox (\utr)^{-1})(\utl\ox 1)(id\ox\eta)m)^\Uparrow\\
        &= (\utl(\utl)^{-1}(id\ox\eta)m)^\Uparrow = (\utl\eta)^\Uparrow = id\\
        u^{-1}_\lhd u_\lhd
        &= (\utl)^{-1}(\lc u^{-1}_\lhd \rc \ox \lc u_\lhd \rc )m = (\utl)^{-1}(\eta \ox (\utl id)^\Uparrow)m = (\utl)^{-1}((\utl id)^\Uparrow)^\Downarrow = (\utl)^{-1}\utl id = id
    \end{align*}
    Which shows that $u_\lhd$ is an isomorphism. Next, we must show that $u_\lhd$ satisfies the left and the right coherence diagrams of Definition \ref{actegory-def}:
    \begin{align*}
        a_\lhd u_\lhd
        &= (1\lhd(\eta\ox\eta)m)\epsilon u_\lhd = (1\lhd(\eta\ox\eta)m \textrm{Hom}(1, u_\lhd))\epsilon\tag{naturality of $\epsilon$}\\
        &= (1\lhd(\eta\ox(\eta \textrm{Hom}(1, u_\lhd)))m)\epsilon\tag{lemma \ref{swap_m}}\\
        &= (1\lhd(\eta\ox(\eta \textrm{Hom}(1, (1\lhd id)\epsilon)))m)\epsilon\tag{def of $u_\lhd$}\\
        &= (1\lhd(\eta\ox(id\eta \textrm{Hom}(1, \epsilon)))m)\epsilon\tag{naturality of $\eta$}\\
        &= (1\lhd(\eta\ox id)m)\epsilon\tag{triangle identity}\\
        &= (1\lhd \utr \eta)\epsilon = (1\lhd \utr)(1\lhd \eta)\epsilon\\
        &= 1\lhd\utr\tag{triangle identity}\\
    \end{align*}
    \begin{align*}
        a_\lhd (u_\lhd\lhd 1)
        &= (1\lhd(\eta\ox\eta)m)\epsilon (u_\lhd\lhd 1)\\
        &= (1\lhd(\eta\ox\eta)m \textrm{Hom}(1, u_\lhd\lhd 1))\epsilon\tag{naturality of $\epsilon$}\\
        &= (1\lhd(\eta\ox(\eta \textrm{Hom}(1, u_\lhd\lhd 1)))m)\epsilon\tag{lemma \ref{swap_m}}\\
        &= (1\lhd(\eta\ox(\eta \textrm{Hom}(u_\lhd, 1)))m)\epsilon\tag{lemma \ref{eta-epsilon-dinatural}}\\
        &= (1\lhd(\eta\ox ((\utl)^{-1}(\lc u_\lhd \rc \ox \eta)m))m)\epsilon\\
        &= (1\lhd(1\ox (\utl)^{-1})(\eta\ox (\lc u_\lhd \rc \ox \eta)m)m)\epsilon\\
        &= (1\lhd(1\ox (\utl)^{-1})a_\ox((\eta\ox\lc u_\lhd\rc)m \ox \eta)m)\epsilon\tag{associativity of $m$}\\
        &= (1\lhd((\utr)^{-1}\ox 1)((\eta\ox\lc u_\lhd\rc)m \ox \eta)m)\epsilon\\
        &= (1\lhd((\utr)^{-1}(\eta\ox\lc u_\lhd\rc)m \ox \eta)m)\epsilon\\
        &= (1\lhd(\eta \textrm{Hom}(1, u_\lhd)\ox\eta)m)\epsilon\\
        &= (1\lhd(\eta \textrm{Hom}(1, (1\lhd id)\epsilon)\ox\eta)m)\epsilon\tag{def of $u_\lhd$}\\
        &= (1\lhd(id\eta \textrm{Hom}(1, \epsilon)\ox\eta)m)\epsilon\tag{naturality of $\eta$}\\
        &= (1\lhd(id\ox\eta)m)\epsilon\tag{triangle identity}\\
        &= (1\lhd\utl\eta)\epsilon = (1\lhd\utl)(1\lhd\eta)\epsilon\\ &= 1\lhd\utl\tag{triangle identity}
    \end{align*}
    \\
    Finally, $u_\lhd$ has to be natural, so $u\lhd f = (f\lhd 1)u\lhd$
    % \begin{center}        
    %     \begin{tikzcd}
    %         X\lhd I\arrow[r, "u_\lhd"]\arrow[d, "f\lhd 1"']
    %         & X\arrow[d, "f"]\\
    %         Y\lhd I\arrow[r, "u_\lhd"']
    %         & Y
    %     \end{tikzcd}
    % \end{center}
    \begin{align*}
        (f\lhd 1)u_\lhd
        &= (1\lhd (\eta \textrm{Hom}(f, 1)))\epsilon u_\lhd\\
        &= (1\lhd ((\utr)^{-1}(\lc f\rc \ox \eta)m))\epsilon u_\lhd\\
        &= (1\lhd (((\utr)^{-1}(\lc f\rc \ox \eta)m) \textrm{Hom}(1, u_\lhd)))\epsilon\tag{naturality of $\epsilon$}\\
        &= (1\lhd (((\utr)^{-1}(\lc f\rc \ox (\eta \textrm{Hom}(1, u_\lhd)))m)))\epsilon\tag{lemma \ref{swap_m}}\\
        &= (1\lhd (((\utr)^{-1}(\lc f\rc \ox (\eta \textrm{Hom}(1, (1\lhd id)\epsilon )))m)))\epsilon\tag{def of $u_\lhd$}\\
        &= (1\lhd (((\utr)^{-1}(\lc f\rc \ox (id\eta \textrm{Hom}(1, \epsilon)))m)))\epsilon\tag{naturality of $\eta$}\\
        &= (1\lhd (((\utr)^{-1}(\lc f\rc \ox id)m)))\epsilon\tag{triangle identity}\\
        &= (1\lhd (((\utr)^{-1}(id\ox\lc f\rc)m)))\epsilon = (1\lhd id \textrm{Hom}(1, f) )\epsilon\\
        &= (1\lhd id)\epsilon f \tag{naturality of $\epsilon$} = u_\lhd f
    \end{align*}    
\end{proof}
Propositions \ref{actegories-w-hom-objs-give-copowers} and \ref{copowers-give-actegories-w-hom-objs} allow us to state the main theorem of the paper:

\begin{theorem}\label{actegory-w-hom-copowers-equivalence}
To give a right actegory with (right) hom-objects is to give a right enriched category with copowers.
\end{theorem}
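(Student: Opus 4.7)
The plan is to assemble the two directions established by Propositions \ref{actegories-w-hom-objs-give-copowers} and \ref{copowers-give-actegories-w-hom-objs}, and then verify that the two constructions are mutually inverse up to canonical isomorphism, so that they define an equivalence (in fact a bijection on the nose) of structures on the underlying category $\bb{X}$.

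Starting from a right $\bb{A}$-actegory with hom-objects $(\bb{X},\lhd,a_\lhd,u_\lhd,\textrm{Hom}(X,-),\epsilon)$, Proposition \ref{actegories-w-hom-objs-give-copowers} builds composition $m$ via the couniversal property applied to $a_\lhd(\epsilon\lhd 1)\epsilon$, identity $id$ via $u_\lhd$, and takes the copower unit $\eta$ to be the unit of the adjunction. Feeding this enriched-with-copowers data into Proposition \ref{copowers-give-actegories-w-hom-objs} reconstructs an action functor together with an associator $a_\lhd' := (1\lhd(\eta\ox\eta)m)\epsilon$ and a unitor $u_\lhd' := (1\lhd id)\epsilon$. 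The first thing I would check is that $a_\lhd' = a_\lhd$ and $u_\lhd' = u_\lhd$: both sides transpose across the adjunction $X\lhd - \dashv \textrm{Hom}(X,-)$ to the same map, namely $(\eta\ox\eta)m$ in the associator case and $id$ in the unitor case, so uniqueness of the transpose forces equality. Since the hom-object functor and $\eta$ are kept throughout, the whole round-trip is the identity.

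For the other direction, starting from a right $\bb{A}$-enriched category with copowers $(\bb{X},\textrm{Hom},m,id,\lhd,\eta)$, Proposition \ref{copowers-give-actegories-w-hom-objs} produces the actegory data and in particular the counit $\epsilon = (1_{\textrm{Hom}(X,Y)})^{\Downarrow}$ computed from $\eta$. Feeding this back through Proposition \ref{actegories-w-hom-objs-give-copowers} produces a new composition $m'$ and identity $id'$ determined respectively by $1\lhd m' = a_\lhd(\epsilon\lhd 1)\epsilon$ and $1\lhd id' = u_\lhd$. I would verify $m' = m$ and $id' = id$ by transposing across the adjunction, writing out both sides using the $(-)^\Uparrow/(-)^\Downarrow$ combinators, and reducing with properties (i)--(iv) listed after Definition \ref{copowers-def} together with the triangle identities, in exactly the style of the calculations carried out in the proof of Proposition \ref{copowers-give-actegories-w-hom-objs} for $a_\lhd a_\lhd^{-1}$ and $u_\lhd u_\lhd^{-1}$.

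The main obstacle is the bookkeeping in the first round-trip: the original $a_\lhd$ is primitive data, while the reconstructed $a_\lhd'$ is built from $\eta$ and $m$, which are themselves built from $a_\lhd$ and $\epsilon$ by invoking the couniversal property. Untangling this without getting lost requires writing both as transposes and appealing to uniqueness, rather than chasing diagrams elementwise. Once that is done, the rest of the identifications (hom-objects, action functor, enrichment operations) are preserved by construction, and the coherence axioms on both sides are the ones already verified in Propositions \ref{actegories-w-hom-objs-give-copowers} and \ref{copowers-give-actegories-w-hom-objs}, so no further coherence work is needed.
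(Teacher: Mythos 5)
Your proposal takes essentially the same approach as the paper: Theorem \ref{actegory-w-hom-copowers-equivalence} is presented there as an immediate consequence of combining Propositions \ref{actegories-w-hom-objs-give-copowers} and \ref{copowers-give-actegories-w-hom-objs}, which is exactly your first step. The round-trip verification you sketch (checking $a_\lhd' = a_\lhd$, $u_\lhd' = u_\lhd$, $m' = m$, $id' = id$ by uniqueness of adjoint transposes) is not carried out in the paper at all, so your proposal is, if anything, more complete than the paper's treatment of this statement.
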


\section{Powers}
\begin{definition}\label{powers-def}
    Dual to copowers, a left $\bb{A}$-enriched category $\bb{X}$ has \textbf{powers} if, for each $A\in\bb{A}$ and $Y\in\bb{X}$, there exists an object $A\brhd Y\in \bb{X}$ and a morphism $\epsilon: A\to  \textrm{Hom}(A\brhd Y, Y)$ in $\bb{A}$ (called the counit of the power) such that for any $B\in\bb{A}$ and $X\in\bb{X}$ there is a bijective correspondence:
    $$
    \infer=[]{B\otimes A \xrightarrow[]{(f\ox\epsilon)m}\textrm{Hom}(X, Y)}{B \xrightarrow[]{f}\textrm{Hom}(X, A\brhd Y)}
    $$
\end{definition}
\begin{theorem}[Dual of theorem \ref{actegory-w-hom-copowers-equivalence}]\label{actegory-w-hom-powers-eqiv}
    To give a left actegory with hom-objects is to give a left enriched category with powers.
\end{theorem}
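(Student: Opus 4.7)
The plan is to derive Theorem \ref{actegory-w-hom-powers-eqiv} as a formal dualization of Theorem \ref{actegory-w-hom-copowers-equivalence}, by invoking the reversal duality for monoidal categories that was already observed in the paper. Recall the remark that a right $\bb{A}$-actegory is precisely a left $\bb{A}^{rev}$-actegory, where $\bb{A}^{rev}$ denotes $\bb{A}$ equipped with the tensor $A \ox^{rev} B := B \ox A$; an identical reading (reverse the order in which composition is written) turns a right $\bb{A}$-enriched category into a left $\bb{A}^{rev}$-enriched category. The strategy is to transport the hypotheses and conclusions of Theorem \ref{actegory-w-hom-copowers-equivalence} through this involution $(-)^{rev}$.

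Concretely, the first step would be to verify that the notion of ``hom-objects'' appropriate to a left $\bb{A}$-actegory (for each $Y \in \bb{X}$, an adjunction $- \brhd Y \dashv \textrm{Hom}(-, Y)$ between $\bb{A}$ and $\bb{X}^{op}$) is exactly the notion of ``right hom-objects'' for the associated right $\bb{A}^{rev}$-actegory in the sense of Proposition \ref{copowers-give-actegories-w-hom-objs}. The second step is to check that a power structure in the sense of Definition \ref{powers-def}, with correspondence $B \to \textrm{Hom}(X, A \brhd Y)$ against $B \ox A \to \textrm{Hom}(X, Y)$, becomes a copower structure in the sense of Definition \ref{copowers-def} after the substitution $\ox \mapsto \ox^{rev}$: the counit $\epsilon: A \to \textrm{Hom}(A \brhd Y, Y)$ and the relation $B \ox A = A \ox^{rev} B$ turn, respectively, into the copower unit and the copower correspondence for the right $\bb{A}^{rev}$-enriched category. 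Once both sides of Theorem \ref{actegory-w-hom-powers-eqiv} are seen to correspond term for term to the hypotheses and conclusion of Theorem \ref{actegory-w-hom-copowers-equivalence} applied to $\bb{A}^{rev}$, the theorem follows immediately by transporting back.

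The main obstacle is purely bookkeeping: one must verify that all the coherence data — the associator $a_\ox$ and unitors $\utl, \utr$ of $\bb{A}$ (which become $a_\ox^{-1}$ and the swapped pair $\utr, \utl$ in $\bb{A}^{rev}$), the composition morphism $m$, the unit/counit of the adjunction, and the combinator equations (i)--(iv) governing $(-)^{\Downarrow}$ and $(-)^{\Uparrow}$ — transform consistently under the reversal. Because the proofs of Propositions \ref{actegories-w-hom-objs-give-copowers} and \ref{copowers-give-actegories-w-hom-objs} depend only on generic adjunction and monoidal coherence data and never invoke any property that distinguishes $\bb{A}$ from $\bb{A}^{rev}$ (in particular, they do not require symmetry or closure of $\bb{A}$), the translation is mechanical and no genuinely new calculation needs to be carried out.
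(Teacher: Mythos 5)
Your overall strategy --- obtain the theorem as a formal dual of Theorem \ref{actegory-w-hom-copowers-equivalence} rather than re-running the computations --- matches the paper's intent exactly (the paper offers nothing beyond the word ``dual''), so making the dualization explicit is the right instinct. The problem is that you have picked the wrong duality to do the work. Passing to $\bb{A}^{rev}$ only toggles the handedness of actions and enrichments: a right $\bb{A}$-enriched category with copowers becomes a \emph{left} $\bb{A}^{rev}$-enriched category, but its copowers remain copowers. Copowers and powers differ by the direction of arrows in $\bb{X}$, not by the order of the tensor in $\bb{A}$: the copower unit $\eta: A\to\textrm{Hom}(X, X\lhd A)$ points \emph{into} the copower object and the correspondence classifies maps \emph{out of} $X\lhd A$ (morphisms $B\to\textrm{Hom}(X\lhd A, Y)$), whereas the power counit $\epsilon: A\to\textrm{Hom}(A\brhd Y, Y)$ points \emph{out of} the power object and the correspondence classifies maps \emph{into} $A\brhd Y$ (morphisms $B\to\textrm{Hom}(X, A\brhd Y)$). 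No substitution $\ox\mapsto\ox^{rev}$ can reverse an arrow of $\bb{X}$, so your second step, as written, would establish ``a left actegory with hom-objects is a left enriched category with \emph{copowers}'' --- a different statement, and not the one Theorem \ref{wb-adjunction-give-op-hom-objects} needs in order to conclude that powered plus copowered is equivalent to the adjunction $-\lhd A \dashv A\brhd -$.

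The duality that actually converts Theorem \ref{actegory-w-hom-copowers-equivalence} into Theorem \ref{actegory-w-hom-powers-eqiv} is the opposite of the \emph{enriched category}, $\bb{X}\mapsto\bb{X}^{op}$ with $\textrm{Hom}^{op}(X,Y):=\textrm{Hom}(Y,X)$ and $m^{op}_{XYZ}:=m_{ZYX}$. Under this involution a right $\bb{A}$-enrichment becomes a left enrichment, the copower object $X\lhd A$ becomes a power object $A\brhd X$, the unit $\eta$ becomes the counit $\epsilon$, and the hom-object adjunction $X\lhd-\dashv\textrm{Hom}(X,-)$ becomes $\textrm{Hom}(-,Y)\dashv -\brhd Y$; this is also why the action relevant to powers is contravariant in $\bb{A}$, i.e.\ why the paper writes $(\bb{A}^{rev})^{op}$ rather than $\bb{A}^{rev}$ in Theorem \ref{wb-adjunction-give-op-hom-objects}. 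You do mention the adjunction against $\bb{X}^{op}$ once, for the hom-objects, but you never use $(-)^{op}$ where it matters, namely in translating the copower data of Definition \ref{copowers-def} into the power data of Definition \ref{powers-def}. A residual $(-)^{rev}$ may still be needed to reconcile the tensor orders $A\ox B$ versus $B\ox A$ in the two correspondences, but it is an afterthought, not the engine. Your closing claim that the remaining bookkeeping is mechanical is fair once the correct involution is identified, but checking that the combinator identities and coherence diagrams transport along $(-)^{op}$ is precisely the content to be supplied, and the proposal does not engage with it.
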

We now show that in a right actegory with hom-objects, if the action functor is a parametrized left adjoint (with parametrized right adjoint $A\brhd -$), the right adjoint action yields powers just as the action yields copowers. More importantly, in this case, both powers and copowers have isomorphic hom-object, demonstrating that they induce the same enrichment up to equivalence.
\begin{theorem}\label{wb-adjunction-give-op-hom-objects}
A right $\bb{A}$-actegory $\bb{X}$ with hom-objects is a left $(\bb{A}^{rev})^{op}$-actegory with hom-objects if and only if $\forall A \in \bb{A}$, the action functor has a right adjoint as $-\lhd A \dashv A\brhd - :\bb{X}\to\bb{X}$.
\end{theorem}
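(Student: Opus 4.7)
The plan is to prove both directions of the equivalence by exploiting the duality of powers and copowers via the mate correspondence, invoking the dual of Theorem~\ref{actegory-w-hom-copowers-equivalence} (i.e.\ Theorem~\ref{actegory-w-hom-powers-eqiv}) to avoid reconstructing actegory axioms from scratch.

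For the $(\Leftarrow)$ direction, suppose $-\lhd A \dashv A\brhd -$ for every $A \in \bb{A}$. First, I would promote $\brhd$ to a functor contravariant in $\bb{A}$ and covariant in $\bb{X}$, using that a parametrised right adjoint is canonically functorial in its parameter. The mate of the associator $a_\lhd: X\lhd(A\ox B)\to (X\lhd A)\lhd B$ under the relevant composite adjunctions yields a natural isomorphism $(A\ox B)\brhd Y \cong A\brhd (B\brhd Y)$, and the mate of $u_\lhd$ yields the unitor; the pentagon and triangle coherences transfer directly from those of $\lhd$, since mating preserves commuting diagrams between adjoint pairs. Once the bare actegory structure is in place, the hom-objects are provided by the same $\textrm{Hom}(X,-)$ already present from the right actegory, via the chain
\[ \bb{A}(B,\textrm{Hom}(X,A\brhd Y)) \cong \bb{X}(X\lhd B, A\brhd Y) \cong \bb{X}((X\lhd B)\lhd A, Y) \cong \bb{A}(B\ox A, \textrm{Hom}(X,Y)), \]
natural in $B$, using the hom-object adjunction, then $-\lhd A\dashv A\brhd -$, then $a_\lhd$, then the hom-object adjunction again. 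Setting $B=I$ recovers the power bijection of Definition~\ref{powers-def} with counit coming from the unit of the $-\lhd A \dashv A\brhd -$ adjunction, and Theorem~\ref{actegory-w-hom-powers-eqiv} then packages this into the required left actegory with hom-objects.

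For the $(\Rightarrow)$ direction, suppose $\bb{X}$ carries both actegory structures with hom-objects. Proposition~\ref{actegories-w-hom-objs-give-copowers} provides the copower bijection $\bb{X}(X\lhd A, Y) \cong \bb{A}(A,\textrm{Hom}(X,Y))$, and its dual provides the power bijection $\bb{X}(X, A\brhd Y) \cong \bb{A}(A,\textrm{Hom}(X,Y))$. After verifying that the two hom-objects coincide up to canonical isomorphism (they must, by the uniqueness of representing objects for the right adjoints of the shared underlying functors), composing the two bijections yields $\bb{X}(X\lhd A, Y) \cong \bb{X}(X, A\brhd Y)$ naturally in $X$ and $Y$, which is exactly the adjunction $-\lhd A \dashv A\brhd -$.

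The main obstacle I anticipate is bookkeeping around the base $(\bb{A}^{rev})^{op}$: ensuring the mated associator has the correct direction and the variance of $\brhd$ matches the left actegory axioms relative to the reversed tensor, and in $(\Rightarrow)$ verifying that the hom-objects of the two actegory structures agree, so that the copower and power bijections compose coherently. Both points are largely calculational, but careful tracking of variance is essential.
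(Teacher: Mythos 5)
Your proposal follows essentially the same route as the paper: in each direction the required adjunction is obtained by composing the hom-object adjunction $X\lhd -\dashv \textrm{Hom}(X,-)$ with the second given adjunction, which is exactly what the paper does via its explicit $(-)^\flat$ computations; your detour through powers and Theorem~\ref{actegory-w-hom-powers-eqiv} in the $(\Leftarrow)$ direction is a repackaging of this (and your mate construction of the associator for $\brhd$ supplies detail the paper leaves implicit). One caveat on your $(\Rightarrow)$ direction: the two hom-objects cannot be identified ``by uniqueness of representing objects for the shared underlying functors,'' because $\textrm{Hom}(X,Y)$ represents $A\mapsto \bb{X}(X\lhd A, Y)$ while the left-actegory hom-object represents $A\mapsto \bb{X}(X, A\brhd Y)$ --- these are different functors unless the adjunction you are trying to establish already holds. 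The theorem has to be read, as the paper's proof does, with the single functor $\textrm{Hom}(X,-)$ serving as hom-objects on both sides, i.e.\ with $X\lhd -\dashv \textrm{Hom}(X,-)$ and $\textrm{Hom}(X,-)\dashv -\brhd X$ as the hypotheses; under that reading your composition of bijections is precisely the paper's argument. A further minor debt: to invoke Theorem~\ref{actegory-w-hom-powers-eqiv} you must still check that your composite bijection is induced by a counit in the form $(f\ox \epsilon)m$ required by Definition~\ref{powers-def}, the analogue of the $(f)^\Downarrow = (\eta\ox f)m$ verification carried out in Proposition~\ref{actegories-w-hom-objs-give-copowers}.
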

\begin{proof}
     We start with proving a right $\bb{A}$-actegory $\bb{X}$ with right hom-objects is a left $(\bb{A}^{rev})^{op}$-actegory with left hom-objects provided that the action functor has a right adjoint $(\eta', \epsilon'): -\lhd A \dashv A\brhd - :\bb{X}\to\bb{X}$. Since $\bb{X}$ is a right $\bb{A}$-actegory with hom-objects there is an adjunction $(\eta, \epsilon): X\lhd - \dashv \textrm{Hom}(X, -) : \bb{A}\to \bb{X}$. Using the two adjunctions we obtain:
    % $$
    % \infer=[]{
    % X \lhd A \xrightarrow[]{(1\lhd f)\epsilon} Y
    % }{
    % A\xrightarrow[]{f} \textrm{Hom} (X, Y)
    % }
    % $$
    % We know that there is another adjunction $(\eta', \epsilon'): - \lhd A\dashv A \brhd - : \lc\bb{X}\rc\to\lc\bb{X}\rc$. So we have:
    % $$
    % \infer=[]{
    %     \infer=[]{X\xrightarrow[]{\eta'(1\brhd ((1\lhd f)\epsilon))} A\brhd Y}{X \lhd A \xrightarrow[]{(1\lhd f)\epsilon} Y}
    % }{
    % A\xrightarrow[]{f} \textrm{Hom}(X, Y)
    % }
    % $$
    % Since $A\to \textrm{Hom}(X, Y)$ is a map in $\bb{A}$, $\textrm{Hom}(X, Y)\to A$ is a map in $(\bb{A}^{rev})^{op}$ so we can write:
    $$
    \infer=[]{
        \infer=[]{X\xrightarrow[]{\eta'(1\brhd ((1\lhd f)\epsilon))} A\brhd Y}{X \lhd A \xrightarrow[]{(1\lhd f)\epsilon} Y}
    }{
    \infer=[]{
        A\xrightarrow[]{f}\textrm{Hom}(X, Y) \quad in~\bb{A}}{\textrm{Hom}(X, Y)\xrightarrow[]{f^{op}} A & in~(\bb{A}^{rev})^{op}}
    }
    $$
    So we can define the $(-)^\flat$ operator as $f^\flat = \eta'(1\brhd ((1\lhd f)\epsilon))$.
    % $$
    % \infer=[]{
    %     X\xrightarrow[]{f^\flat = \eta'(1\brhd ((1\lhd f)\epsilon))} A\brhd Y
    % }{
    % \textrm{Hom}(X, Y)\xrightarrow[]{f^{op}} A
    % }
    % $$
    It suffices to prove that $(\textrm{Hom}(k, 1) f^{op} h)^\flat$ $= k f^\flat (h\brhd 1) $ which means $(h f \textrm{Hom}(k, 1))^\flat = k f^\flat (h\brhd 1) $. We have:
    \begin{align*}
        (h f \textrm{Hom}(k, 1))^\flat
        &= \eta' (1\brhd ((1\lhd h f \textrm{Hom}(k, 1))\epsilon))\\
        &= \eta' (h \brhd (1\lhd f \textrm{Hom}(k, 1))\epsilon)\tag{dinaturality of $\eta'$(lemma \ref{eta-epsilon-dinatural})}\\
        &= \eta' (h \brhd ((k \lhd f)\epsilon)) = k (\eta' (h\brhd ((1\lhd f)\epsilon)))\\
        &= k (\eta' (1\brhd ((1\lhd f)\epsilon)))(h\brhd 1) = k f^\flat (h\brhd 1)\\
    \end{align*}
    And consequently $\textrm{Hom}(X, -)\dashv -\brhd X: \bb{X}\to(\bb{A}^{rev})^{op}$ which means that $(\bb{A}^{rev})^{op}$-actegory has hom-objects. Now we have to prove the other direction of the theorem. Using the two adjunctions $(\eta, \epsilon): X\lhd - \dashv \textrm{Hom}(X, -) : \bb{A}\to \bb{X}$ and $(\eta'', \epsilon''):\textrm{Hom}(X, -)\dashv -\brhd X: \bb{X}\to(\bb{A}^{rev})^{op}$ we have: 
    $$
    \infer=[]{
    X\xrightarrow[]{\eta'(\textrm{Hom}(f, 1)\eta\brhd 1)} A\brhd Y
    }{
    \infer=[]{
        \textrm{Hom}(X, Y)\xrightarrow[]{\textrm{Hom}(f, 1)\eta}A \quad in~(\bb{A}^{rev})^{op}}{
        \infer=[]{A\xrightarrow[]{\eta\textrm{Hom}(1, f)}\textrm{Hom}(X, Y) \quad in~\bb{A}}{
            X\lhd A\xrightarrow[]{f}Y
        }
        }
    }
    $$
    So we can define the $(-)^\flat$ operator as $f^\flat = \eta'(\textrm{Hom}(f, 1)\eta\brhd 1)$. It suffices to show that $((h\lhd 1)fk)^\flat = hf^\flat(1\brhd k)$. We have:
    \begin{align*}
        hf^\flat(1\brhd k)
        & = h\eta'(\textrm{Hom}(f, 1)\eta\brhd 1)(1\brhd k) = h\eta'(1\brhd k)(\textrm{Hom}(f, 1)\eta\brhd 1)\\
        & = h\eta'(\textrm{Hom}(kf, 1)\eta\brhd 1)\tag{dinaturality of $\eta'$ (lemma \ref{eta-epsilon-dinatural})}\\
        &= \eta'(\textrm{Hom}(kf, 1)\textrm{Hom}(1, h)\eta\brhd 1)\tag{naturality of $\eta'$}\\
        &= \eta'(\textrm{Hom}(kf(h\lhd 1), 1)\eta\brhd 1)\tag{dinaturality of $\eta$ (lemma \ref{eta-epsilon-dinatural})}\\
        &= ((h\lhd 1)fk)^\flat
    \end{align*}
    And this proves that there is an adjunction $-\lhd A \dashv A\brhd - : \bb{X}\to\bb{X}$.
\end{proof}
Using Theorems \ref{wb-adjunction-give-op-hom-objects} and \ref{actegory-w-hom-powers-eqiv}, we can conclude that \emph{a category is both powered and copowered, if and only if the power and the copower form an adjoint pair.}

\nocite{*}

%\bibliographystyle{eptcs} 
%\bibliography{references}
\end{document}